\definecolor{NiColor}{RGB}{77,77,255}
\definecolor{NiColoRed}{RGB}{255,77,77}
\definecolor{NiCitation}{RGB}{0,181,26}
\newtheoremstyle{TheoremStyle}
{3pt}
{3pt}
{\slshape}
{}
{\sc}
{:}
{.5em}
{}
\theoremstyle{TheoremStyle}
\newtheorem{theorem}{Theorem}
\newtheorem{lemma}[theorem]{Lemma}
\newtheorem{definition}[theorem]{Definition}
\newtheorem{remark}[theorem]{Remark}
\newtheorem{example}[theorem]{Example} 
\def\@endtheorem{\hfill$\lozenge$} 
\title{Global observables in statistical mechanics}
\author{\href{mailto:chris.vandeven@ru.nl}{C. J. F. van de Ven}}
\affil[]{Radboud University, Faculty of Science, Heyendaalse weg 13, 56525 AJ Nijmegen,The Netherlands}
\def\bearray{\begin{eqnarray}}
\def\earray{\end{eqnarray}}
\def\beq{\begin{equation}}
\def\eeq{\end{equation}}
\def\b0{{\bf 0}}
\def\b{\textnormal{b}}
\begin{document}

\maketitle
\begin{abstract}
\noindent
We present a canonical construction of global observables -sometimes referred to in the literature as macroscopic observables or observables at infinity- in statistical mechanics, providing a unified treatment of both commutative and non-commutative cases. Unlike standard approaches, the framework is formulated directly in the $C^*$-algebraic setting, without relying on any specific representation. 
\end{abstract}

\section{Introduction}
In lattice statistical mechanics one studies large collections of interacting particles arranged on a (regular) lattice, and analyzes their behavior in appropriate limiting regimes to uncover their macroscopic properties.
Two distinct types of such ``infinite volume'' limits are commonly considered. 

The first is the \emph{thermodynamic limit},
in which the lattice size tends to infinity while local observables are examined within finite regions of the lattice. This limit is mathematically formalized by the so-called quasi-local algebra and allows one to rigorously define expectation values of local quantities and ensures that bulk properties, such as energy density or correlation functions, stabilize in the infinite-volume system. Its significance is well established in the context of phase transitions and equilibrium phenomena in quantum and classical statistical mechanics \cite{Bratteli_Robinson_1981_I,Bratteli_Robinson_1981_II, Israel}.  

The second, less standard but equally important, is what is sometime referred to as \emph{macroscopic limit} \cite{DRW,Landsman_2017}, which focuses on observables that are  often called ``global'' or ``observables at infinity,'' such as spatial averages of local quantities over regions of diverging size \cite{Enter,Hepp,LR}. Unlike the thermodynamic limit, which addresses the stability of local observables under volume growth, the macroscopic limit captures collective, non-local behavior that reflects emergent classical features of quantum systems \cite{DV2,LMV}. While this construction is classical in spirit and well established in classical statistical mechanics \cite{Georgii}, its quantum counterpart is considerably less developed, often requiring sophisticated representation-theoretic methods \cite{Bratteli_Robinson_1981_I,Bratteli_Robinson_1981_II}.

In this work we develop a $C^*$-algebraic framework to study the macroscopic behavior of statistical mechanical systems, providing a unified treatment of both commutative and non-commutative settings. This approach emphasizes the  connection between quantum and classical systems and does not require the use of any particular representation. 
In the non-commutative setting, the resulting $C^*$-algebra naturally contains  commutative $C^*-$ subalgebras generated by macroscopic averages, thereby providing a bridge between classical thermodynamics as a limit of quantum statistical mechanics, confirming in particular the ideas described in \cite{Landsman_2017,Raggio}.

. 

\section{Full $C^*$-product}
Let $\Gamma \subset \mathbb{R}^d$ be a countable set. This already endows the set of finite subsets of $\Gamma$ with a partial order (inclusion), which is upward directed and hence defines a notion of convergence, namely ``convergence along the net of finite subsets of $\Gamma$, directed by inclusion', which we denote by
\[
\lim_{\Lambda \nearrow \Gamma} F(\Lambda) = \alpha.
\]
This means that for every $\varepsilon > 0$, there exists a finite subset $K_\varepsilon \Subset \Gamma$ such that
\[
\|F(\Lambda) - \alpha\| < \varepsilon \quad \text{for all } \Lambda \supset K_\varepsilon.
\]
Here, $F$ is a function (or operator) defined on finite subsets of $\Gamma$ and taking values in a normed space.\footnote{This notion of convergence should not be confused with convergence in the sense of van Hove.}
\\\\
To each $x \in \Gamma$ we associate a unital $C^*$-algebra $\mathcal{B}_x$; in fact, we assume the same algebra $\mathcal{B}$ for all $x$, and use $x$ only to denote the lattice position. The (minimal) tensor product of $\otimes_{x\in\Lambda}\mathcal{B}_x$ is denoted by $\mathcal{B}_\Lambda$ and ensures that $\mathcal{B}_\Lambda$ is again a $C^*$-algebra. 
We  consider the \textit{full $C^*$-product} $\prod_{\Lambda\Subset\Gamma}\mathcal{B}_\Lambda$ defined by
\begin{align}\label{Eq: full C*-product}
\prod_{\Lambda\Subset\Gamma}\mathcal{B}_\Lambda:=\{(a_\Lambda)_\Lambda\,|\,(\|a_\Lambda \|_\Lambda)_\Lambda\in\ell^\infty(\{\Lambda\Subset\Gamma\})\}\,,
\end{align}
where $(a_\Lambda)_\Lambda$ should be understood as element in the algebraic direct product with pointwise operations, and $\{\Lambda\Subset\Gamma\}$ means the set of all finite subsets of $\Gamma$. 
As it is well-known \cite{Blackadar_2006} $\prod_{\Lambda\Subset\Gamma}\mathcal{B}_\Lambda$ is a $C^*$-algebra with respect to sup norm $\|(a_\Lambda)_\Lambda\|_\infty:=\sup\limits_{\Lambda\Subset\Gamma}\|a_\Lambda\|_\Lambda$.
For $(a_\Lambda)_\Lambda,(b_\Lambda)_\Lambda\in \Pi_{\Lambda\Subset\Gamma}\mathcal{B}_\Lambda$, we introduce the following $\sim$-equivalence relation 
 \begin{align}\label{Eq: equivalence relation among sequences}
(a_\Lambda)_\Lambda\sim(b_\Lambda)_\Lambda\Longleftrightarrow
 	\lim_{\Lambda\nearrow\Gamma}\|a_\Lambda-b_\Lambda\|_\Lambda=0\,.
 \end{align}
 For each given sequence $(a_\Lambda)_\Lambda$, we will denote by $[a_\Lambda]_\Lambda:=[(a_\Lambda)_\Lambda]$ the corresponding equivalence class with respect to \eqref{Eq: equivalence relation among sequences}.
 Moreover, the \textit{direct $C^*$-sum}
 \begin{align}\label{Eq: direct C*-sum}
 \bigoplus_{\Lambda\Subset\Gamma}\mathcal{B}_\Lambda:&=\{(a_\Lambda)_\Lambda\in\prod_{\Lambda\Subset\Gamma}\mathcal{B}_\Lambda\,|\,
 (\|a_\Lambda\|_\Lambda )_\Lambda\in C_0(\{\Lambda\Subset\Gamma\})\}\\&=\{(a_\Lambda)_\Lambda\in\prod_{\Lambda\Subset\Gamma}\mathcal{B}_\Lambda\,|\,
 \lim_{\Lambda\nearrow\Gamma}\|a_\Lambda\|_\Lambda=0\}\,
 \end{align}
 is a closed two-sided ideal in $\prod_{\Lambda\Subset\Gamma}\mathcal{B}_\Lambda$ and thus we may consider the quotient
 \begin{align}\label{Eq: quotient algebra}
[\mathcal{B}]_\sim:=\prod_{\Lambda\Subset\Gamma}\mathcal{B}_\Lambda/\bigoplus_{\Lambda\Subset\Gamma}\mathcal{B}_\Lambda\,,
 \end{align}
 which is nothing but the space of $\sim$-equivalence classes $[(a_\Lambda)_\Lambda]_\Lambda$ for bounded sequences $(a_\Lambda)_\Lambda$, i.e.
$$[\mathcal{B}]_\sim=q(\prod_{\Lambda\Subset\Gamma}\mathcal{B}_\Lambda)$$
 where $q$ is the canonical quotient map, 
 \[
 q : \prod_{\Lambda\Subset\Gamma} \mathcal{B}_\Lambda \;\longrightarrow\; \prod_{\Lambda\Subset\Gamma} \mathcal{B}_\Lambda/\bigoplus_{\Lambda\Subset\Gamma} \mathcal{B}_\Lambda, \quad
 q\big( (a_\Lambda)_{\Lambda\Subset\Gamma} \big) = [ (a_\Lambda)_{\Lambda \Subset \Gamma} ]_\Lambda.
 \]
 Importantly, $[\mathcal{B}]_\sim$ is a $C^*$-algebra with norm
 \begin{align}\label{Eq: norm of the quotient algebra}
 	\|[a_\Lambda]_\Lambda\|_{[\mathcal{B}]_\sim}=\limsup_{\Lambda\nearrow \Gamma}\|a_\Lambda\|_\Lambda\,.
 \end{align}
 Hence, passing to the quotient makes it possible to identify sequences that represent the same observable in the limit as $\Lambda\nearrow\Gamma$, so that only essentially distinct  observables are captured. Of course, $[\mathcal{B}]_\sim$  is very large, so that one typically focuses on suitable $C^*$-subalgebras.
\begin{example}[Quasi-local algebra]
{\em
We consider the $*$-algebra of local observables
\[
\dot{\mathcal{B}}^\infty := \bigcup_{\Lambda \Subset \Gamma} \mathcal{B}_\Lambda
\]
constructed as follows. Each local algebra $\mathcal{B}_{\Lambda_0}$ is embedded into larger volumes via the canonical unital $*$-homomorphisms
\[
\iota_{\Lambda_0,\Lambda}(b_{\Lambda_0})
:= b_{\Lambda_0} \otimes \mathbf{1}_{\Lambda \setminus \Lambda_0},
\qquad \Lambda_0 \subset \Lambda,
\]
where $\mathbf{1}_{\Lambda \setminus \Lambda_0}$ denotes the identity element in
$
\mathcal{B}_{\Lambda \setminus \Lambda_0}
= \bigotimes_{x \in \Lambda \setminus \Lambda_0} \mathcal{B}_x.
$
These maps turn $\{\mathcal{B}_\Lambda\}_{\Lambda \Subset \Gamma}$ into an inductive system of $*$-algebras. In particular, $\dot{\mathcal{B}}^\infty$ is a $*$-algebra under the induced operations and contains all local observables.
Via the induced embeddings, $\dot{\mathcal{B}}^\infty$ can be realized as a $*$-subalgebra of the full $C^*$-product
$\prod_{\Lambda \Subset \Gamma} \mathcal{B}_\Lambda$.
\\\\
The {\bf quasi-local algebra} is then defined to be the completion of the quotient
$$[\mathcal{B}]^\infty:=\overline{q(\dot{\mathcal{B}}^\infty)}^{\|\cdot\|}\subset [\mathcal{B}]_\sim,$$
where  norm is given by $\|[a_\Lambda]_\Lambda\|:=\|[a_\Lambda]_\Lambda\|_{[\mathcal{B}]^\infty}=\limsup_{\Lambda\nearrow\Gamma}\|a_\Lambda\|_\Lambda$, which can be shown to equal the actual limit. 
It is not difficult to see that $[\mathcal{B}]^\infty$ is a $C^*$-subalgebra of $[\mathcal{B}]_\sim$. Note that this construction precisely coincides with the usual definition of the quasi-local algebra, as explained in e.g. \cite[Chapter 8]{Landsman_2017}. The quasi-local algebra provides the standard framework for describing the thermodynamic limit in statistical mechanics \cite{Bratteli_Robinson_1981_I,Bratteli_Robinson_1981_II}.
\hfill$\diamond$
}
\end{example}

\noindent
The  $C^*$-algebra $\prod_{\Lambda\Subset\Gamma} \mathcal{B}_\Lambda$ contains many other elements for which the expectation value is not defined in several physically relevant states, such as translation-invariant states. 
We give some examples of such sequences.
\begin{example}
We consider the following sequences of tensor products of matrices:
\begin{itemize}
    \item[(i)]
\[
a_\Lambda := \bigotimes_{x \in \Lambda} \sigma^1_x,
\]
where $\sigma^1_x$ is the Pauli matrix $\sigma^1$ at site $x$.  
\item[(ii)] 
\[
a_\Lambda := \bigotimes_{x \in \Lambda} 
\begin{cases}
\sigma^1_x, & x \text{ odd},\\
\sigma^3_x, & x \text{ even}.
\end{cases}
\]
\item[(iii)] Define a sequence $(a_\Lambda)_\Lambda$ by partitioning $\Gamma$ into consecutive blocks of strictly increasing lengths $(B_n)_{n\ge0}$, assigning $\sigma^1$ to even-indexed blocks and $\sigma^3$ to odd-indexed blocks, and setting
\[
a_\Lambda := \bigotimes_{x \in \Lambda} a_x, \quad 
a_x :=
\begin{cases}
\sigma^1 & \text{if $x$ is in an even-indexed block,}\\
\sigma^3 & \text{if $x$ is in an odd-indexed block.}
\end{cases}
\]
\end{itemize}
For a translation-invariant state $\omega$, the limit 
\[
\lim_{\Lambda \nearrow \Gamma} \omega(a_\Lambda)
\]
is generally not defined for these sequences, because the product over infinitely many fluctuating operators diverges or oscillates.
\hfill$\diamond$
\end{example}

\section{$C^*$-algebraic construction of global observables in quantum statistical mechanics}
We now construct another $C^*$-subalgebra of $\prod_{\Lambda\Subset\Gamma} \mathcal{B}_\Lambda$.
Let
$$
\mathcal{C}^\infty:=\left\{ (a_\Lambda)_\Lambda \in \prod_{\Lambda\Subset\Gamma} \mathcal{B}_\Lambda \ :\  \forall \ (b_\Lambda)_\Lambda \in \dot{ \mathcal{B}}^\infty \ \lim_{\Lambda \nearrow \Gamma} \|[a_\Lambda, b_\Lambda]\|_\Lambda = 0 \right\}.
$$
\begin{lemma}
$\mathcal{C}^\infty$ is a norm-closed *-subalgebra of $\prod_\Lambda \mathcal B_\Lambda$, and hence a $C^*$-algebra.
\end{lemma}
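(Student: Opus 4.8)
The plan is to verify that $\mathcal{C}^\infty$ is a linear subspace closed under products and under the $*$-operation, and that it is closed in the sup-norm; since it then lies inside the $C^*$-algebra $\prod_{\Lambda\Subset\Gamma}\mathcal{B}_\Lambda$ as a norm-closed $*$-subalgebra, it automatically inherits the $C^*$-identity and is a $C^*$-algebra. Throughout I would fix an arbitrary test sequence $(b_\Lambda)_\Lambda\in\dot{\mathcal{B}}^\infty$ and repeatedly use two structural facts: every element of the full product is bounded, i.e. $\sup_{\Lambda}\|a_\Lambda\|_\Lambda<\infty$; and $\dot{\mathcal{B}}^\infty$ is itself a $*$-algebra, so that $(b_\Lambda^*)_\Lambda$ is again an admissible test sequence.

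The three algebraic closures reduce to elementary commutator identities. Linearity is immediate from the bilinearity of $(a,b)\mapsto[a,b]$. For products I would use the Leibniz rule $[a_\Lambda a'_\Lambda,b_\Lambda]=a_\Lambda[a'_\Lambda,b_\Lambda]+[a_\Lambda,b_\Lambda]a'_\Lambda$, giving
\[
\|[a_\Lambda a'_\Lambda,b_\Lambda]\|_\Lambda\le\Big(\sup_{\Lambda}\|a_\Lambda\|_\Lambda\Big)\,\|[a'_\Lambda,b_\Lambda]\|_\Lambda+\|[a_\Lambda,b_\Lambda]\|_\Lambda\,\Big(\sup_{\Lambda}\|a'_\Lambda\|_\Lambda\Big),
\]
whose right-hand side tends to $0$ as $\Lambda\nearrow\Gamma$; the uniform bounds are exactly where membership in the full product, and not merely in the algebraic direct product, is essential. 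For the $*$-operation I would use $[a_\Lambda^*,b_\Lambda]=-[a_\Lambda,b_\Lambda^*]^*$ together with the $C^*$-isometry $\|c^*\|_\Lambda=\|c\|_\Lambda$, so that $\|[a_\Lambda^*,b_\Lambda]\|_\Lambda=\|[a_\Lambda,b_\Lambda^*]\|_\Lambda\to0$, the admissibility of $(b_\Lambda^*)_\Lambda$ being what allows the conclusion.

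The substantive step is norm-closedness. Suppose $(a^{(n)}_\Lambda)_\Lambda\in\mathcal{C}^\infty$ for every $n$ and that these converge in sup-norm to $(a_\Lambda)_\Lambda$, and set $C:=\sup_{\Lambda}\|b_\Lambda\|_\Lambda<\infty$. I would split
\[
[a_\Lambda,b_\Lambda]=[a_\Lambda-a^{(n)}_\Lambda,b_\Lambda]+[a^{(n)}_\Lambda,b_\Lambda]
\]
and bound the first term by $2C\,\|(a_\Lambda)_\Lambda-(a^{(n)}_\Lambda)_\Lambda\|_\infty$, which is uniform in $\Lambda$. Given $\varepsilon>0$ one first fixes $n$ so that this is below $\varepsilon/2$ for all $\Lambda$ simultaneously, and only afterwards invokes $(a^{(n)}_\Lambda)_\Lambda\in\mathcal{C}^\infty$ to produce a finite $K_\varepsilon\subset\Gamma$ making the second term below $\varepsilon/2$ whenever $K_\varepsilon\subset\Lambda$. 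The main obstacle is precisely this order of quantifiers, that is, the interchange of the two limiting processes $n\to\infty$ and $\Lambda\nearrow\Gamma$: one must freeze $n$ before choosing $K_\varepsilon$. What makes the argument work is that convergence in the full product is \emph{uniform} in $\Lambda$, so a single $n$ controls the approximation term over all $\Lambda$ at once, while boundedness of $(b_\Lambda)_\Lambda$ keeps the commutator estimate finite.

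Finally, I would record the conceptual reformulation that subsumes all of the above. Because $\|q\big(([a_\Lambda,b_\Lambda])_\Lambda\big)\|_{[\mathcal{B}]_\sim}=\limsup_{\Lambda\nearrow\Gamma}\|[a_\Lambda,b_\Lambda]\|_\Lambda$ vanishes exactly when $\lim_{\Lambda\nearrow\Gamma}\|[a_\Lambda,b_\Lambda]\|_\Lambda=0$, the set $\mathcal{C}^\infty$ coincides with the preimage $q^{-1}\big(q(\dot{\mathcal{B}}^\infty)'\big)$, where $q(\dot{\mathcal{B}}^\infty)'$ denotes the relative commutant of $q(\dot{\mathcal{B}}^\infty)$ inside $[\mathcal{B}]_\sim$. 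The commutant of a self-adjoint subset of a $C^*$-algebra is always a norm-closed $*$-subalgebra, and $q$ is a continuous $*$-homomorphism, so its preimage is a norm-closed $*$-subalgebra of the full product. This single observation packages the previous paragraphs and explains structurally why $\mathcal{C}^\infty$ is an algebra at all.
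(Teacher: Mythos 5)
Your proposal is correct and follows essentially the same route as the paper: the same commutator identities for the $*$-algebra property, the same two-term splitting with the uniform bound $2\|a-a^{(n)}\|_\infty\|b_\Lambda\|_\Lambda$ for norm-closedness (your explicit attention to the order of quantifiers is a welcome clarification of the paper's ``take $\Lambda\nearrow\Gamma$ then $n\to\infty$''), and the same appeal to closed $*$-subalgebras of $C^*$-algebras being $C^*$-algebras. Your closing reformulation of $\mathcal{C}^\infty$ as $q^{-1}\bigl(q(\dot{\mathcal{B}}^\infty)'\bigr)$ is not part of the paper's proof but coincides with the relative-commutant description the paper records immediately after the lemma, so it adds a valid structural summary rather than a different argument.
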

\begin{proof}
We must prove the following three conditions.
\begin{itemize}
    \item[(1)] $*$-algebra property: Let $a=(a_\Lambda)_\Lambda, a'=(a_\Lambda')_{\Lambda}\in \mathcal C^\infty$ and $b=(b_\Lambda)_\Lambda\in \dot{\mathcal B}^\infty$. Then
\[
[a+a',b] = [a,b] + [a',b], \quad [a^*,b] = -[a,b^*]^*, \quad [aa',b] = a[a',b] + [a,b]a'.
\]
Since $\|[a_\Lambda,b_\Lambda]\|\to 0$ and $\|[a'_\Lambda,b_\Lambda]\|\to 0$, the same holds for $a+a'$, $a^*$, and $aa'$. Thus $\mathcal C^\infty$ is a *-subalgebra.
\item[(2)] Norm-closedness: Let $(a^{(\iota)})\subset \mathcal C^\infty$ be a net that converges in the sup-norm to $a$, i.e.
\(\|a^{(\iota)}-a\|_\infty \to 0\). Then for any $b\in \dot{\mathcal B}^\infty$ and each $\Lambda$,
\[
\|[a_\Lambda,b_\Lambda]\|_\Lambda \le \|[a_\Lambda - a^{(\iota)}_\Lambda, b_\Lambda]\|_\Lambda + \|[a^{(\iota)}_\Lambda,b_\Lambda]\|_\Lambda
\le 2 \|a - a^{(\iota)}\|_\infty \|b_\Lambda\|_\Lambda + \|[a^{(\iota)}_\Lambda,b_\Lambda]\|_\Lambda.
\]
Taking the limit $\Lambda\nearrow \Gamma$ and then $\iota\to\infty$, the right-hand side goes to $0$. Hence $a\in \mathcal C^\infty$.
\item[(3)] $C^*$-property: Being a closed *-subalgebra of the C*-algebra $\prod_\Lambda \mathcal B_\Lambda$ with the sup-norm, $\mathcal C^\infty$ is itself a $C^*$-algebra.
\end{itemize}
\end{proof}
\noindent
Since $\bigoplus_{\Lambda\Subset\Gamma}\mathcal{B}_\Lambda\subset \mathcal{C}^\infty$, we can define
\[
[\mathcal{C}]^\infty:=\mathcal{C}^\infty/
\bigoplus_{\Lambda\Subset\Gamma}\mathcal{B}_\Lambda
=q(\mathcal{C}^\infty).
\]
It follows that $[\mathcal{C}]^\infty$ is a $C^*$-algebra, and 
$$[\mathcal{C}]^\infty=[\mathcal{B}]_\sim\cap (q(\dot{\mathcal{B}}^\infty))'$$
is the relative commutant of the $*$-algebra $[\dot{\mathcal{B}}]^\infty$ in the  $C^*$-algebra $[\mathcal{B}]_\sim$, and therefore norm-closed. 
Furthermore, we may define
$$
\mathcal{C}_\Lambda^\infty:=\left\{ (a_{\Lambda'})_{\Lambda'} \in \prod_{\Lambda'\Subset\Gamma} \mathcal{B}_{\Lambda'} : \forall \quad b_\Lambda\in  \mathcal{B}_\Lambda, \ \lim_{\Lambda' \nearrow \Gamma} \|[a_{\Lambda'}, b_\Lambda]\|_{\Lambda'} = 0 \right\},
$$
where $b_\Lambda$ has to be understood as element of $\mathcal{B}_{\Lambda'}$ under the canonical embedding. It can be demonstrated that $\mathcal{C}_\Lambda^\infty$ is a closed $C^*$-subalgebra of the full $C^*$-product and $q(\mathcal{C}_\Lambda^\infty)$ is closed in $[\mathcal{B}]_\sim$. This implies that 
$$
\mathcal{C}^\infty=\bigcap_{\Lambda \Subset \Gamma} \mathcal{C}_\Lambda^\infty.
$$

\begin{remark}
We would like to point out that the $C^*$-algebra $\mathcal{C}^\infty$
depends only on the net of local $C^*$-algebras over the directed set of finite subsets of $\Gamma$, and not on any particular exhaustion of $\Gamma$. 
Indeed, the  elements
$(a_\Lambda)_\Lambda$ of the full $C^*$-product $\prod_{\Lambda \Subset \Gamma} \mathcal{B}_\Lambda$
are indexed by \emph{all} finite subsets of $\Gamma$ simultaneously, and the
defining condition
\[
  (a_\Lambda)_\Lambda \in \mathcal{C}^\infty
  \quad \Longleftrightarrow \quad
  \forall\, (b_\Lambda)_\Lambda \in \dot{\mathcal{B}}^\infty,\quad
  \lim_{\Lambda \nearrow \Gamma} \|[a_\Lambda, b_\Lambda]\|_\Lambda = 0.
\]
\hfill$\diamond$
\end{remark}
\noindent
Motivated by the pioneering ideas described in \cite[Sec.~2.3.6]{Enter}, we introduce the following definition.
\begin{definition}
The $C^*$-subalgebra $[\mathcal{C}]^\infty$ is called the \textbf{algebra of global observables}. 
\hfill$\diamond$
\end{definition}
\noindent
It can be proved  that $[\mathcal{C}]^\infty$ is not separable.\footnote{The rigorous proof of this fact is quite long and technical and is left for future work. It follows from the existence of uncountably many pairwise distance-separated elements in $\mathcal{C}^\infty$.} Moreover, it turns out to be non-commutative, as the following lemma shows.
\begin{lemma}\label{Lemm: non-comm}
  $[\mathcal{C}]^\infty$ is non-commutative.
\end{lemma}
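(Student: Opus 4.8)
The plan is to exhibit two elements of $[\mathcal{C}]^\infty$ whose commutator is nonzero. Since the quotient map $q$ is a $*$-homomorphism, it suffices to produce sequences $(a_\Lambda)_\Lambda,(a'_\Lambda)_\Lambda\in\mathcal{C}^\infty$ with $\limsup_{\Lambda\nearrow\Gamma}\|[a_\Lambda,a'_\Lambda]\|_\Lambda>0$; then $q\big([a_\Lambda,a'_\Lambda]\big)=[q(a),q(a')]$ has strictly positive norm in $[\mathcal{C}]^\infty$ by \eqref{Eq: norm of the quotient algebra}, so the two classes fail to commute. I first observe that the most natural candidates, the spatial averages $a_\Lambda=|\Lambda|^{-1}\sum_{x\in\Lambda}\sigma^1_x$ and $a'_\Lambda=|\Lambda|^{-1}\sum_{x\in\Lambda}\sigma^3_x$, do \emph{not} serve this purpose: only the diagonal terms of $[a_\Lambda,a'_\Lambda]$ survive, yielding a commutator of norm $O(|\Lambda|^{-1})\to0$. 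This reflects the fact that macroscopic averages generate a \emph{commutative} subalgebra, so a genuinely quantum, non-averaged construction is needed.

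Assume $\mathcal{B}$ is noncommutative (the case of interest) and fix $c,d\in\mathcal{B}$ with $\delta:=\|[c,d]\|>0$. Enumerate the countable set $\Gamma=\{x_1,x_2,\dots\}$ and, for each nonempty finite $\Lambda\Subset\Gamma$, let $m(\Lambda):=\max\{j: x_j\in\Lambda\}$ be the largest index occurring in $\Lambda$. I then place the fixed non-commuting pair at the single, $\Lambda$-dependent site $x_{m(\Lambda)}$:
\[
a_\Lambda:=c_{x_{m(\Lambda)}}\otimes\mathbf{1}_{\Lambda\setminus\{x_{m(\Lambda)}\}},\qquad a'_\Lambda:=d_{x_{m(\Lambda)}}\otimes\mathbf{1}_{\Lambda\setminus\{x_{m(\Lambda)}\}},
\]
with $a_\emptyset=a'_\emptyset=0$. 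Both are bounded sequences in $\prod_{\Lambda\Subset\Gamma}\mathcal{B}_\Lambda$, with $\|a_\Lambda\|_\Lambda=\|c\|$ and $\|a'_\Lambda\|_\Lambda=\|d\|$. The guiding idea is that the ``active site'' $x_{m(\Lambda)}$ escapes every fixed finite region as $\Lambda\nearrow\Gamma$.

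It remains to check the two required properties. For membership in $\mathcal{C}^\infty$, fix a local observable supported on some $\Lambda_0\Subset\Gamma$ and set $M:=\max\{j:x_j\in\Lambda_0\}$. Whenever $\Lambda\supseteq\{x_1,\dots,x_{M+1}\}$ we have $m(\Lambda)\ge M+1$, so $x_{m(\Lambda)}\notin\Lambda_0$ and $a_\Lambda$ acts as the identity on $\Lambda_0$; hence the commutator with the local observable vanishes identically for all such $\Lambda$, and the limit in the definition of $\mathcal{C}^\infty$ is $0$ (and likewise for $a'$). For non-commutativity, a direct computation gives $[a_\Lambda,a'_\Lambda]=[c,d]_{x_{m(\Lambda)}}\otimes\mathbf{1}$, so $\|[a_\Lambda,a'_\Lambda]\|_\Lambda=\delta$ for every nonempty $\Lambda$, whence $\limsup_{\Lambda\nearrow\Gamma}\|[a_\Lambda,a'_\Lambda]\|_\Lambda=\delta>0$. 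I expect the only delicate point to be the handling of the net of \emph{all} finite subsets rather than an increasing sequence: one must verify that $m(\Lambda)\to\infty$ along this net (given any $M$, every $\Lambda\supseteq\{x_1,\dots,x_M\}$ satisfies $m(\Lambda)\ge M$), which is exactly what forces the active site to escape every fixed $\Lambda_0$ and secures membership in $\mathcal{C}^\infty$.
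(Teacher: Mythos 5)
Your proof is correct and follows essentially the same route as the paper's: both exhibit a pair of fixed non-commuting single-site observables placed at a site that ``escapes to infinity'' with $\Lambda$, so that they asymptotically commute with every local sequence (hence lie in $\mathcal{C}^\infty$) while their mutual commutator has constant nonzero norm, which survives the quotient norm $\limsup_{\Lambda\nearrow\Gamma}\|\cdot\|_\Lambda$. Your version is in fact slightly more careful than the paper's, since you make the choice of the moving site explicit via an enumeration of $\Gamma$ and verify that it diverges along the net of finite subsets directed by inclusion, and you state the needed hypothesis that $\mathcal{B}$ is non-commutative.
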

\begin{proof}
Consider the sequences $a=(a_\Lambda)_\Lambda$ and $c=(c_\Lambda)_\Lambda$, defined by
\[
a_\Lambda := \mathbf{1}_{\Lambda \setminus \{x_\Lambda\}} \otimes a_{x_\Lambda}, \quad
c_\Lambda := \mathbf{1}_{\Lambda \setminus \{x_\Lambda\}} \otimes c_{x_\Lambda},
\]
where \(a_{x_\Lambda}, c_{x_\Lambda}\) are fixed non-commuting observables acting on site \(x_\Lambda \in \Lambda\), i.e., local observables in \(\mathcal{B}_{\{x_\Lambda\}}\), such that \([a_{x_\Lambda}, c_{x_\Lambda}] \neq 0\). These sequences are what we call ``local sequences translated to infinity,'' where \(x_\Lambda\) is chosen so that \(x_\Lambda \to \infty\) as \(\Lambda \nearrow \Gamma\).
\\\\
For any local observable \(b \in \dot{\mathcal{B}}^\infty\), supported on a finite subset \(\Lambda_0 \Subset \Gamma\), define the sequence
\[
b_\Lambda := 
\begin{cases}
b_{\Lambda_0} \otimes \mathbf{1}_{\Lambda \setminus \Lambda_0}, & \Lambda_0 \subseteq \Lambda, \\
0, & \text{otherwise}.
\end{cases}
\]
Since \(a_\Lambda\) acts non-trivially only on site \(x_\Lambda\) which eventually is supported outside  \(\Lambda_0\) and \(b_\Lambda\) acts non-trivially only on sites in \(\Lambda_0\), their supports are disjoint for all sufficiently large \(\Lambda\). Hence,
\[
[a_\Lambda, b_\Lambda] = 0 \quad \text{for all sufficiently large } \Lambda.
\]
Therefore,
\[
\lim_{\Lambda \nearrow \Gamma} \|[a_\Lambda, b_\Lambda]\|_\Lambda = 0,
\]
and similarly for \(c_\Lambda\). This shows
$(a_\Lambda)_\Lambda, (c_\Lambda)_\Lambda \in \mathcal{C}^\infty$.
By definition,
\[
[a_\Lambda, c_\Lambda] 
= (\mathbf{1}_{\Lambda \setminus \{x_\Lambda\}} \otimes a_{x_\Lambda})(\mathbf{1}_{\Lambda \setminus \{x_\Lambda\}} \otimes c_{x_\Lambda}) 
- (\mathbf{1}_{\Lambda \setminus \{x_\Lambda\}} \otimes c_{x_\Lambda})(\mathbf{1}_{\Lambda \setminus \{x_\Lambda\}} \otimes a_{x_\Lambda})=\mathbf{1}_{\Lambda \setminus \{x_\Lambda\}} \otimes [a_{x_\Lambda}, c_{x_\Lambda}],
\]
which implies
\[
\|[a_\Lambda, c_\Lambda]\|_\Lambda = \|[a_{x_\Lambda}, c_{x_\Lambda}]\|_{x_\Lambda}.
\]
Note that the norm on the right-hand side does not depend on \(\Lambda\). A similar computation shows that the same holds true for equivalence classes $[a]$ and $[c]$. Recall that the norm on the quotient algebra \([\mathcal{C}]^\infty\) is defined as
 \[
 \|[[a], [c]]\| := \limsup_{\Lambda \nearrow \Gamma} \|[a_\Lambda, c_\Lambda]\|_\Lambda.
\]
 Since \(\|[a_\Lambda, c_\Lambda]\|_\Lambda = \|[a_{x_\Lambda}, c_{x_\Lambda}]\|_{x_\Lambda}\) for all \(\Lambda\), and the since the commutator \([a_{x_\Lambda}, c_{x_\Lambda}] \neq 0\), we conclude that
\[
\|[[a], [c]]\| > 0.
\]
As a result, the algebra \([\mathcal{C}]^\infty\) is non-commutative.
\end{proof}
\noindent
The observables ``translated to infinity'' considered in Lemma~\ref{Lemm: non-comm} play a role in e.g. measuring correlations, as these are precisely the expectation values of products of local observables and their large-distance translates, and their asymptotic behavior encodes the thermodynamic phase structure of the system.   Besides these observables we give another example of a global observable.
\begin{example}
Set $\Gamma=\mathbb{Z}$.
Consider the following sequence of local tensor products:
\[
c_N := \bigotimes_{k=1}^{N - \lfloor N/2 \rfloor} \mathbf{1}\;\otimes\; \bigotimes_{j=1}^{\lfloor N/2 \rfloor} a,
\]
where $\mathbf{1}$ denotes the identity operator on a single site and $a$ is a non-trivial and non-zero single-site matrix with $\|a\|\le 1$. This sequence looks like
\[
\begin{aligned}
c_1 &= \mathbf{1}, \\
c_2 &= \mathbf{1} \otimes a, \\
c_3 &= \mathbf{1} \otimes \mathbf{1} \otimes a, \\
c_4 &= \mathbf{1} \otimes \mathbf{1} \otimes a \otimes a, \\
c_5 &= \mathbf{1} \otimes \mathbf{1} \otimes \mathbf{1} \otimes a \otimes a, \\
c_6 &= \mathbf{1} \otimes \mathbf{1} \otimes \mathbf{1} \otimes a \otimes a \otimes a, \\
c_7 &= \mathbf{1} \otimes \mathbf{1} \otimes \mathbf{1} \otimes \mathbf{1} \otimes a \otimes a \otimes a, \\
c_8 &= \mathbf{1} \otimes \mathbf{1} \otimes \mathbf{1} \otimes \mathbf{1} \otimes a \otimes a \otimes a \otimes a.
\end{aligned}
\]
Note that the sequence $c=(c_N)_N$ is uniformly bounded, and for any fixed local observable $b$, $c$ asymptotically commutes with $b$.
Therefore, $c$ defines a legitimate element in $\mathcal{C}^\infty$. However $c$ does in general not commute with the sequences constructed in the  proof Lemma \ref{Lemm: non-comm}. An easy computation shows that the same remains true for $[c]\in [\mathcal{C}]^\infty$.
\hfill$\diamond$
\end{example}
\noindent
This raises the natural
question of characterizing the center $\mathcal{Z}([\mathcal{C}]^\infty)$,
which is a proper commutative $C^*$-subalgebra of $[\mathcal{C}]^\infty$. This is a non-trivial issue that will be addressed in a follow-up work.

\subsection{Comparison with current approaches}
In this section we briefly recall the well-known state-dependent construction of the so called ``center'' and the ``algebra at infinity'' \cite{Bratteli_Robinson_1981_I}.
To each finite region $\Lambda\Subset\Gamma$, the complement $\Lambda^c:=\Gamma\setminus\Lambda$ induces the subalgebra of the quasi-local algebra $[\mathcal{B}]^\infty$, given by
\begin{align*}
[\mathcal{B}_{\Lambda^c}]=\overline{q\bigg{(}\bigcup_{\Delta\Subset\Lambda^c}\mathcal{B}_{\Delta}\bigg{)}}^{\|\cdot\|},
\end{align*}
where $\mathcal{B}_{\Lambda^c}:=\bigcup_{\Delta\Subset\Lambda^c}\mathcal{B}_{\Delta}\subset$ $\dot{\mathcal{B}}^\infty$. Subsequently, one may assign to a given state $\omega$ on $[\mathcal{B}]^\infty$  two $C^*$-algebras:
\begin{itemize}
    \item the center $\mathcal{B}_\omega^c:=(\pi_\omega([\mathcal{B}]^\infty))''\cap (\pi_\omega([\mathcal{B}]^\infty))'$;
    \item the algebra at infinity $\mathcal{B}_\omega^\infty:=\bigcap_{\Lambda\Subset\Gamma}(\pi_\omega([\mathcal{B}_{\Lambda^c}]))''$,
\end{itemize}
where $\pi_\omega$ is the GNS-representation with GNS Hilbert space $\mathcal{H}_\omega$ and $(\pi_\omega([\mathcal{B}]^\infty))'$ and $(\pi_\omega([\mathcal{B}]^\infty))''$ denote the commutant, resp. double commutant in $B(\mathcal{H}_\omega)$, i.e.  the algebra of bounded operators on $\mathcal{H}_\omega$. In particular, the algebra at infinity contains macroscopic averages (see Section \ref{sec:subalgebras}) and if the algebra at infinity is trivial, the state $\omega$ can be identified with a pure thermodynamic phase, see e.g. \cite[Chapter 8]{Landsman_2017} for a comprehensive discussion.
\\\\
To make a comparison with the current setting as developed in this note, we notice
that the algebra $[\mathcal{B}_{\Lambda^c}]$ can be rewritten as
$$
[\mathcal{B}_{\Lambda^c}]:=\overline{\left\{ (a_{\Lambda'})_{\Lambda'} \in \dot{\mathcal{B}}^\infty : \forall \quad b_\Lambda\in  \mathcal{B}_\Lambda, \ \lim_{\Lambda' \nearrow \Gamma} \|[a_{\Lambda'}, b_\Lambda]\|_{\Lambda'} = 0 \right\}/\bigoplus_{\Lambda\Subset\Gamma}\mathcal{B}_\Lambda}^{\|\cdot\|},
$$
which is the analog of $[C_\Lambda]^\infty$ in $[\mathcal{B}]^\infty$. In particular, $[\mathcal{B}_{\Lambda^c}]\subset [\mathcal{C}_\Lambda]^\infty$, where the inclusion can be shown to be strict if the single site algebra is separable.  Indeed, the latter implies that
$[\mathcal{B}_{\Lambda^c}]$ is separable but  $[\mathcal{C}_\Lambda]^\infty$ is {\em not}. As a result, $[\mathcal{C}]^\infty$ has many  more observables affiliated with it, and since it is state independent,  is a natural candidate for describing global observables in  statistical mechanics.

\subsection{Commutative subalgebras}\label{sec:subalgebras}
We now construct a $C^*$-subalgebra of
$[\mathcal{C}]^\infty$, containing all ``macroscopic averages". This algebra induces a commutative $C^*-$algebra by passing to a suitable quotient, and therefore resembles 
a classical observable algebra describing the macroscopic limit arising from the underlying quantum statistical mechanics.
\\\\
To illustrate this idea, we focus on the one-dimensional lattice i.e., $\Gamma = \mathbb{Z}$.  Let $\Lambda_1\subset \Lambda_2\subset\dots$ be a strictly increasing sequence of connected finite subsets of $\mathbb{Z}$, with $|\Lambda_N|=N$ and $\cup_N\Lambda_N=\mathbb{Z}$.
For each region $\Lambda=\Lambda_N$, we consider the linear operator (left-shift operator) 
\[
\gamma_\Lambda \colon \mathcal{B}_\Lambda \to \mathcal{B}_\Lambda,
\]
uniquely defined by continuous and linear extension of the following map on elementary tensors:
\begin{align}\label{Eq: gamma-operator}
	\gamma_{\Lambda}(a_{1} \otimes \ldots \otimes a_{N}) := a_{2} \otimes \ldots \otimes a_{N} \otimes a_{1}\,,
\end{align}
where \(a_{1}, \dots, a_{N} \in \mathcal{B}\).
The operator $\gamma_\Lambda$ is a $*$-endomorphism of the algebra $\mathcal{B}_\Lambda$. Moreover, $\gamma_\Lambda^{N} = \operatorname{id}$. We then  define the \textbf{averaged shift map}
\begin{align}\label{Eq: averaged gamma-operator}
	\overline{\gamma}_\Lambda := \frac{1}{N} \sum_{j=0}^{N-1} \gamma_\Lambda^j\,,
\end{align}
where $\gamma_\Lambda^j=\gamma_\Lambda\circ \cdots \circ \gamma_\Lambda$ ($j$-times).
The image of this map, i.e.
$\overline{\gamma}_\Lambda(\mathcal{B}_\Lambda)$
is a $C^*$-subalgebra of $\mathcal{B}_\Lambda$ consisting of the $\gamma$-invariant elements.

\begin{definition}\label{Def: gamma-sequence}
A sequence \((a_\Lambda)_\Lambda\) is called a \textbf{$\gamma$-sequence} if there exists a finite subset \(\Lambda_0 \Subset \Gamma\) and an element \(a_{\Lambda_0} \in \mathcal{B}_{\Lambda_0}\) such that
\begin{align}\label{Eq: gamma-sequence property}
	a_\Lambda = \overline{\gamma}_\Lambda^{\Lambda_0}(a_{\Lambda_0}) 
	:= 
	\begin{cases}
		\overline{\gamma}_\Lambda\left( \mathbf{1}_{\Lambda \setminus \Lambda_0} \otimes a_{\Lambda_0} \right), & \Lambda \supseteq \Lambda_0, \\
		0, & \text{otherwise},
	\end{cases}
\end{align}
where \(\mathbf{1}_{\Lambda \setminus \Lambda_0}\) denotes the identity on the complementary tensor factors.
\hfill$\diamond$
\end{definition}
In what follows, we define the $*$-algebra \(\dot{\mathcal{B}}^\infty_\gamma \subset \prod_{\Lambda \Subset \Gamma} \mathcal{B}_\Lambda\) generated by all $\gamma$-sequences, together with its image under the canonical projection \([\dot{\mathcal{B}}]^\infty_\gamma \subset [\mathcal{B}]_\sim\).  The algebra $[\dot{\mathcal{B}}]^\infty_\gamma$ enjoys remarkable properties; in particular, it can be completed to a commutative \(C^*\)-algebra \([\mathcal{B}]^\infty_\gamma\) \cite[Prop.~6]{DV2}. Moreover, the $\gamma$-sequences constitute a continuous field of $C^*$-algebras over $\mathbb{N}\cup\{\infty\}$ with limit $[\mathcal{B}]^\infty_\gamma$.

\begin{remark}\label{Remark: wichtig}
 {\em 
Each state on  $[\mathcal{B}]^\infty_\gamma$ canonically induces a translationally invariant state on the quasi-local algebra $[\mathcal{B}]^\infty$. Indeed, it is not difficult to see that the map
\begin{align*}
    &\overline{\gamma}_\infty: [\mathcal{B}]^\infty\to [\mathcal{B}]^\infty_\gamma;\\
    &[a_\Lambda]_\Lambda\mapsto [\overline{\gamma}_\Lambda(a_{\Lambda})]_\Lambda
\end{align*}
is well-defined. In particular, applying a state $\omega\in S([\mathcal{B}]^\infty_\gamma)$ to an equivalence class of $\gamma$-sequences, yields
$$
\omega([\overline{\gamma}_\Lambda(a_{\Lambda})])=\omega\circ \overline{\gamma}_\infty([a_\Lambda]_\Lambda)
$$
If now we define 
$$
\hat{\omega}:=\omega\circ \overline{\gamma}_\infty,
$$
then $\hat{\omega}$ is a translationally invariant state on  $[\mathcal{B}]^\infty$, since for any $j$, one has
$\hat{\omega}\circ \gamma_\infty^j=\hat{\omega}$,
where 
$$
\gamma_\infty^j([a_\Lambda]_\Lambda)=[\gamma_\Lambda^j(a_\Lambda)]_\Lambda
$$
is the translation of $j$ lattice sites.
\\\\
\noindent
In this way, the expectation values of $\gamma$-sequences in $\hat{\omega}$ capture the macroscopic averages of local observables and thereby encode the distribution over ergodic components of the corresponding translation-invariant state, see  \cite[Chapter 4]{Bratteli_Robinson_1981_I} for further details hereon.
\hfill$\diamond$ 
}
\end{remark}
\noindent
Moreover, as the following lemma shows $[\mathcal{B}]^\infty_\gamma$ is a $C^*$-subalgebra of $[\mathcal{C}]^\infty$.
 \begin{lemma}
     $[\mathcal{B}]^\infty_\gamma\subset [\mathcal{C}]^\infty$.
 \end{lemma}
 \begin{proof}
 The assertion follows directly from the construction. We first prove it for $\gamma$-sequences; 
the general case then follows immediately from \cite[Prop.~6]{DV2}. 
For any $\gamma$-sequence $(a_\Lambda)_\Lambda$ as in Definition~\ref{Def: gamma-sequence}, 
we claim that for all local observables $(b_\Lambda)_\Lambda \in \dot{\mathcal{B}}^\infty$ 
\begin{equation}
  \lim_{\Lambda \nearrow \Gamma} \|[a_\Lambda, b_\Lambda]\|_\Lambda = 0.
\end{equation}
To see this, fix $\Lambda_0 \Subset \Gamma$ and $a_{\Lambda_0} \in \mathcal{B}_{\Lambda_0}$ such that
\begin{equation}
  a_\Lambda = \overline{\gamma}_{\Lambda}^{\Lambda_0}(a_{\Lambda_0})
  = \frac{1}{N} \sum_{j=0}^{N-1}
    \gamma_\Lambda^j \big( 1_{\Lambda \setminus \Lambda_0} \otimes a_{\Lambda_0} \big),
  \qquad \Lambda \supseteq \Lambda_0,
\end{equation}
and $a_\Lambda = 0$ otherwise. Let $(b_\Lambda)_\Lambda$ be a local observable supported on some fixed $\Lambda' \Subset \Gamma$. 
Since $b_\Lambda$ acts non-trivially only on the tensor factors associated with $\Lambda'$, 
and $a_\Lambda$ is an average over cyclic shifts, the commutator norm can be estimated as
\begin{align}
  \| [a_\Lambda, b_\Lambda] \|_\Lambda
  &= \left\| \frac{1}{N} \sum_{j=0}^{N-1}
    \left[ \gamma_\Lambda^j \big( 1_{\Lambda \setminus \Lambda_0} \otimes a_{\Lambda_0} \big),
           b_\Lambda \right] \right\|_\Lambda \\
  &\leq \frac{1}{N} \sum_{j=0}^{N-1}
     \left\| \left[ \gamma_\Lambda^j \big( 1_{\Lambda \setminus \Lambda_0} \otimes a_{\Lambda_0} \big),
                     b_\Lambda \right] \right\|_\Lambda.
\end{align}
For large $\Lambda$, most of the shifts $\gamma_\Lambda^j$ move the support of $a_{\Lambda_0}$ away 
from the support of $b_{\Lambda}$. More precisely, at most $|\Lambda_0| + |\Lambda'|$ terms in the 
sum correspond to non-disjoint supports, and hence to potentially non-zero commutators. For all other 
terms, the supports are disjoint, so the commutator vanishes. Thus,
\begin{equation}
  \| [a_\Lambda, b_\Lambda] \|_\Lambda
  \leq \frac{|\Lambda_0|+|\Lambda'|}{N}\; 2 \|a_{\Lambda_0}\|\, \|b_\Lambda\|_\Lambda.
\end{equation}
Since $(b_\Lambda)_\Lambda$ is a fixed local observable, $\|b_\Lambda\|_\Lambda$ is uniformly bounded. 
Hence,
$$
\| [a_\Lambda, b_\Lambda] \|_\Lambda=O(1/N),
$$
and thus
\begin{equation}
  \lim_{\Lambda \nearrow \Gamma} \| [a_\Lambda, b_\Lambda] \|_\Lambda = 0.
\end{equation}
This shows that every $\gamma$-sequence $(a_\Lambda)_\Lambda$ belongs to $\mathcal{C}^\infty$. Passing to equivalence classes yields
\begin{equation}
  [a_\Lambda]_\Lambda = q((a_\Lambda)_\Lambda) \in [\mathcal{C}]^\infty.
\end{equation} 
 \end{proof}

\section{$C^*$-algebraic construction of global observables in classical statistical mechanics}
Consider again the discrete set $\Gamma \subset \mathbb{R}^d$. To each $x \in \Gamma$ we associate a unital commutative $C^*$-algebra endowed with a Poisson bracket, $(\mathcal{A}_x, \{\cdot,\cdot\})$. This structure plays an important role in equilibrium thermodynamics within classical statistical mechanics, as it provides a natural framework for formulating the classical KMS condition \cite{DV1}.

For any finite subset $\Lambda \Subset \Gamma$, we denote the tensor product $\bigotimes_{x \in \Lambda} \mathcal{A}_x$ by $\mathcal{A}_\Lambda$, and equip it with the local supremum norm $\|\cdot\|_\Lambda$.  To define the Poisson bracket, we first fix a dense Poisson $*$-subalgebra $\dot{\mathcal{A}_x} \subset \mathcal{A}_x$ and then define $\dot{\mathcal{A}}_\Lambda$ accordingly for each finite subset $\Lambda \Subset \Gamma$. Then, in a fashion similar to above, the algebra
$$
\dot{\mathcal{D}}^\infty:=\left\{ (a_\Lambda)_\Lambda \in \prod_{\Lambda\Subset\Gamma} \dot{\mathcal{A}}_\Lambda : \ \forall (b_\Lambda)_\Lambda \in \dot{ \mathcal{A}}^\infty \ \lim_{\Lambda \nearrow \Gamma} \|\{a_\Lambda, b_\Lambda\}_\Lambda\|_\Lambda = 0 \right\},
$$
where $ \dot{\mathcal{A}}^\infty$ denotes the Poisson algebra of all local sequences, is a commutative $*$-subalgebra of $\prod_{\Lambda\Subset\Gamma} \mathcal{A}_\Lambda$.  
For a finite region $\Lambda \Subset \Gamma$, we define
\[
\dot{\mathcal{D}}_\Lambda^\infty :=
\Big\{ (a_{\Lambda'})_{\Lambda'} \in \prod_{\Lambda' \Subset \Gamma} \dot{\mathcal{A}}_{\Lambda'} \;:\; 
\forall b_\Lambda \in \dot{\mathcal{A}}_\Lambda \ \lim_{\Lambda' \nearrow \Gamma} \| \{ a_{\Lambda'}, b_\Lambda \}_{\Lambda'} \|_{\Lambda'} = 0
\Big\},
\]
where $b_\Lambda$ is embedded canonically into $\mathcal{A}_{\Lambda'}$ for $\Lambda' \supseteq \Lambda$. Analogous to the algebra $\mathcal{C}^\infty$, it follows that 
\[
\dot{\mathcal{D}}^\infty = \bigcap_{\Lambda \Subset \Gamma} \dot{\mathcal{D}}_\Lambda^\infty.
\]
Moreover, we can complete $\dot{\mathcal{D}}^\infty$ with respect to the norm $\|\cdot\|_\infty:=\sup_{\Lambda\Subset\Gamma}\|\cdot\|_\Lambda$, yielding the $C^*$-algebra
$$\mathcal{D}^\infty:=\overline{\dot{\mathcal{D}}^\infty}^{\|\cdot\|_\infty}.
$$ 
The pertinent quotient 
\[
[\mathcal{D}]^\infty:= \mathcal{D}^\infty/
\bigg{(}\bigoplus_{\Lambda\Subset\Gamma}\mathcal{A}_\Lambda\cap \mathcal{D}^\infty\bigg{)},
\]
with $\bigoplus_{\Lambda\Subset\Gamma}\mathcal{A}_\Lambda\cap \mathcal{D}^\infty$ the ideal of vanishing sequences,
is the classical analog of $[\mathcal{C}]^\infty$, and forms a  $C^*$-subalgebra of the quotient $C^*$-algebra $[\mathcal{A}]_\sim$.

\begin{remark}
\emph{We recall the tail-$\sigma$-algebra from classical statistical mechanics: 
\begin{equation} \label{eq:tail_sigma}
\mathscr{T}_\infty := \bigcap_{\Lambda \Subset \Gamma} \mathscr{F}_{\Lambda^c},
\end{equation}
where $\mathscr{F}_{\Lambda^c}$ is the
$\sigma$-algebra of events that only depend on ``spins'' located outside $\Lambda$. The algebra $\mathcal{D}^\infty$ resembles its algebraic analog, canonically formalized in a $C^*$-algebraic manner.
\hfill$\diamond$}
\end{remark}


    
    
    

\paragraph{Acknowledgments.}
The author is supported by a  Marie Sk\l odowska-Curie Actions (MSCA) Fellowship for postdoctoral research, under project number 101210672--MACROQC and acknowledges the support of the European Commission. 
Moreover, he gratefully acknowledges Roberto Beneduci for the invitation to IQSA 2025 and the enjoyable hospitality, and thanks Carmine De Rosa for useful discussions on this topic.

\paragraph{Data availability statement.} Data sharing is not applicable to this article as no new data were created or analysed in this study.

\paragraph{Funding Information.} Not applicable.

\end{document}